\begin{document}

\title{Optimal Repurchasing  Contract Design for Efficient Utilization of Computing Resources}

\author{Zhengyan Deng\inst{1}\and
Yusen Zheng\inst{2}\and
Chenliang Sheng\inst{3}\and Shaowen Qin\inst{4}
}
\authorrunning{Z. Deng et al.}

\titlerunning{Contract Design for Computing Resources Repurchasing}
\institute{Jiangnan University, Wuxi, China
\and
Peking University, Beijing, China
\and
Hefei University of Technology, Hefei, China
\and
Flinders University, 
Adelaide, Australia\\
\email{6240910002@stu.jiangnan.edu.cn, yusen@stu.pku.edu.cn, 2023212716@mail.hfut.edu.cn, shaowen.qin@flinders.edu.au}}
\maketitle              

\begin{abstract}

The rapid advancement of AI and other emerging technologies has triggered exponential growth in computing resources demand. 
Faced with prohibitive infrastructure costs for large-scale computing clusters, users are increasingly resorting to leased computing resources from third-party providers.
However, prevalent overestimation of operational requirements frequently leads to substantial underutilization of the computing resources.
To mitigate such inefficiency, we propose a contract-based incentive framework for computing resources repurchasing. Comparing to auction mechanisms, our design enables providers to reclaim and reallocate surplus computing resources through market-driven incentives.
Our framework operates in a multi-parameter environment where both clients’ idle resource capacities and their unit valuations of retained resources are private information, posing a significant challenge to contract design. 
Two scenarios are considered based on whether all clients possess the same amount of idle resource capacity. 
By transforming the contract design problem into solving a mathematical program, we obtain the optimal contracts for 
each scenario, which can maximize the utility of computing resources providers while ensuring the requirements of incentive compatibility (IC) and individual rationality (IR).  
This innovative design not only provides an effective approach to reduce the inefficient utilization of computing resources, but also establishes a market-oriented paradigm for sustainable computing ecosystems.

\keywords{Computing resources repurchasing  \and Contract design \and Individual rationality \and Incentive compatibility}
\end{abstract}
\setcounter{footnote}{0}
\newpage
\section{Introduction}
The rapid development of the digital economy and advancements in artificial intelligence (AI) have positioned computing resources as a key driver of modern productivity. 
According to the ``2023–2024 China Artificial Intelligence Computing Resources Development Assessment Report'', published by IDC, the scale of China's intelligent computing resource reached 260 EFLOPS in 2022 and is projected to exceed 1,117 EFLOPS by 2027, with a remarkable compound annual growth rate (CAGR) of 33.9\%
\footnote{\url{https://www.ieisystem.com/global/file/2023-12-01/17014097286402c975afc8bfb91fe59018c23ec288049fd.pdf}}. 
This rapid growth has been accompanied by a sharp increase in the demand for high-performance GPUs. 
The imbalance between supply and demand has led to rising prices and frequent shortages, making it increasingly challenging to acquire high-performance GPUs. 
In response, an increasing number of clients are turning to the computing resource leasing market.

Traditional centralized cloud computing infrastructures are increasingly unable to meet these growing computational requirements. Computing resources leasing is an emerging service model designed to provide clients with flexible and efficient computing  resources.\footnote{\url{https://www.21jingji.com/article/20231212/herald/4a5f93fbee91a636d7f324ea9ea69efd.html}} It is also better equipped to meet the increasingly diverse demands of the current internet industry. Under the computing resources leasing model, clients can lease computing resources from third-party providers based on their specific needs to carry out computational tasks, without establishing extensive computing infrastructure locally.

However, existing computing resources leasing models, exhibit a common issue of low computing resources utilization efficiency. The issue has become more critical in the current era of massive demand for computing resources.
According to IDC data, the utilization rate of general-purpose computing centers in China, which primarily serve enterprises, is only 10\% to 15\%\footnote{\url{https://news.qq.com/rain/a/20241029A06DME00}}. This indicates that a significant portion of computing  resources remains idle due to a mismatch between supply and actual demand.
Meanwhile, some existing clients—though actively consuming computing resources—allocate more resources than necessary to non-essential tasks, obscuring the true demand.
It is a challenge for computing resources providers to accurately detect the actual effective utilization of their resources.
Bridging these gaps through traditional infrastructure expansion often requires significant capital expenditures, extended implementation timelines, and increased energy demands, resulting in systemic inefficiencies that ultimately leads to substantial waste of resources on a social economical scale.

One economical and effective approach to address this problem is to reclaim idle resources of current clients through economic incentives. To this end, we focus on designing contract mechanisms that encourage clients to sell back their idle computing resources to the provider, who can then reallocate these recovered resources to other clients in need.
The fundamental idea behind our contract design is to enable clients with idle resources to maximize their payoffs by carefully selecting a contract item that aligns with their true type. Specifically, we define a client’s type as a tuple  $(v,c)$ that includes their capacity $c$ of idle computing resources and their valuation $v$ per unit of resource.
On the other hand, the provider aims to maximize his utility, which is the difference between the revenue generated by reallocating the reclaimed resources to new clients and the cost incurred in repurchasing these resources. To achieve this goal, the provider designs customized contract items tailored to different client types and presents them as a contract menu. Clients then select a contract item from the menu (or choose not to participate) based on their type. Once a contract is chosen, clients return the resources as stipulated in the agreement and receive the corresponding compensation.

The main contribution of this paper is the design of an implementable contract that maximizes the provider’s utility in a multi-dimensional private information environment (private capacity of idle resources and private valuation for one unit of resource) while simultaneously satisfying the requirements of individual rationality (IR) and incentive compatibility (IC) constraints.
To address the challenges brought by the two-parameter setting, we assume that the provider knows the discrete probability distribution of clients' types $(v,c)$. 
Such a setting can effectively help us reduce the computational complexity of the problem.
Based on this assumption, we innovatively transform the contract design problem into a mathematical program. To be specific, the objective function of this program is formulated to maximize the expected utility of the provider, while the constraints are carefully designed to ensure IC, IR, and other feasibility requirements. Through this rigorous formulation, we are able to derive optimal contracts for the computing resources repurchasing problem.

\section{Related Works}

To meet the high demand of computing resources associated with the fast development of AI, large corporations build data centers, while small corporations that cannot afford such infrastructure choose to rely on leasing computing resources to address resource demand in AI research. \cite{sun2024}. Research on computing resources leasing remains limited, and even fewer studies focus on reclaiming idle resources. However, computing resources leasing is fundamentally very much similar to other leasing services, and reclaiming idle resources is comparable to utilizing idle virtual resources like CPU and bandwidth.

\textit{Mechanism Design for Idle Digital Resource Reutilization.} Numerous studies have been conducted on the recycling or utilization of idle computer resources.
In 2017, Quttoum et al., \cite{quttoumAFAMFairAllocation2017} proposed a fair resource allocation model (AFAM) aimed at optimizing resource allocation in Cloud Data Center Networks (CDNs) and improving resource utilization through the VCG auction mechanism. Later, in 2024 \cite{quttoumAMADAdaptiveMapping2024}, they introduced a resource reclamation model (AMAD) based on the Stackelberg leadership framework to reclaim idle resources from users who have already leased resources by employing repeated leader-follower games, thereby further enhancing the actual resource utilization rate.
Hosseini et al., \cite{hosseiniCrowdcloudCrowdsourcedSystem2019} proposed a crowdsourced cloud infrastructure called Crowdcloud, which aims to repurpose users' idle computing resources through crowdsourcing methods, thereby creating a decentralized cloud computing platform.
Muktadir et al. \cite{henaalmuktadirResourceNegotiationGame2018} proposed a leader-follower game-based mechanism for reclaiming unused virtual resources. The infrastructure provider acts as the leader and virtual network operators as followers, negotiating iteratively to set the compensation price for reclaimed resources. And in 2019 \cite{muktadirRepeatedLeaderFollower2019}, the authors refined their original model by further categorizing the types of VNOs and introducing a strategy that adjusts resource compensation prices based on historical negotiation data. 
Liu et al., \cite{liuGameTheorybasedOptimization2020} approached the problem of reclaiming idle computing resources from organizations by modeling it as a non-cooperative game. The study proposed a dynamic pricing mechanism to integrate and redistribute these idle computing resources, making them available for other cloud users.

\textit{Contract-based Mechanism Design.} 
Contract-based incentive mechanism has been widely used in such fields as resource sharing and crowdsensing areas.
Zhang et al., \cite{zhangContractBasedIncentiveMechanisms2015} developed a contract-theoretic incentive mechanism to encourage user participation in Device-to-Device communications within cellular networks. This mechanism models the interaction between the base station and users, offering performance-reward contracts that effectively boost user engagement and significantly increase network capacity compared to other methods.
Ma et al., \cite{maIncentivizingWiFiNetwork2018} proposed a contract-theory-based incentive mechanism to promote Wi-Fi resource sharing in crowdsourced wireless community networks. This mechanism aims to encourage users to share their private Wi-Fi access points, thereby expanding Wi-Fi coverage and reducing deployment costs for operators.
Dai et al., \cite{daiTrustDrivenContractIncentive2022} proposed a trust-driven contract incentive framework to address the issues of trust and incentives in mobile crowdsensing networks. This contract incentive scheme takes into account users’ privacy preferences, to design a set of optimal contracts that maxi- mize the utility of both users and platforms.
Zhao et al., \cite{zhaoContractBasedIncentiveMechanism2024} considered mobile users' varying privacy preferences and potential information asymmetry from users' passive disclosure of preferences, the study proposed a contract theory-driven incentive mechanism to optimize the balance between resource consumption and task completion under conditions of information asymmetry.
Xie et al., \cite{xieIncentiveMechanismDesign2025} proposed an incentive mechanism design based on contract theory for resource trading in computational power networks. This mechanism takes into account the trust between resource providers and consumers, and promotes the effective transaction of resources by designing a credible incentive mechanism.

Although numerous studies have focused on the utilization of idle virtual resources, research on reclaiming leased computing resources remains scarce. In particular, there has been little discussion of computing resource reclamation from the perspective of contract theory. In practical computing resources leasing scenarios, clients not only conceal their true valuation of computing resources but also hide the actual scale of resources they are using, which increases the complexity of the problem beyond that of traditional studies. Given this,  it is an innovative attempt to introduce contract theory to systematically explore the issue of computing resource reclamation.

\section{Preliminaries}\label{sec:pre}

This section firstly introduces the problem of repurchasing computing resources and then formally define the contract, along with the desired properties of the contract design.

\subsection{The Computing Resources Repurchasing Problem}
\label{sec:formalization}

In the computing resources repurchasing problem, a \emph{provider} $P$ offers computing resources, and there are $n$ \emph{clients} who have already rented these resources. 
Each current client may have idle resources.
When new clients arrive seeking resources, the provider may not have enough available. 
Therefore, the provider $P$ needs to recover idle resources from current clients and release them to the new clients.
To ensure the minimum demand of the new coming clients is met, a lower bound on supply available for release (denoted by $D$) is set in advance. 
If the provider $P$ fails to collect enough resources to meet $D$, he will face a penalty.
This penalty is justified, as failing to satisfy the demand of new clients would force the provider to either lease computing resources from other providers to retain these clients or lose them, both of which would incur a loss.
Therefore, provider $P$ aims to maximize his utility by designing \emph{contract} that incentivize clients to return their idle resources, which will be explained in detail in \cref{sec:contracts}.

Let $N = \set*{1,2,\cdots,n}$ denote the current client set. Each current client $i$ has a \emph{valuation} $v_i$ for a unit of computing resource and a \emph{capacity} $c_i$ of her idle resources. We assume that the valuations of all clients are from a set, denoted by $V=\{v^1,v^2,\cdots,v^K\}$, which contains $K$ distinct values with $v^1< v^2<\cdots< v^K$. Similarly, the capacities of idle resources are also assumed to be from a set, denoted by $C=\{c^1,c^2,\cdots,c^L\}$, with $c^1< c^2< \cdots<c^L$. 
Therefore, the private information of each client $i$ can be characterized by a tuple $(v_i,c_i)\in V\times C$.
We refer to the private information of client $i$ as the \emph{type} of client $i$.
For simplicity, let $\Gamma = V \times C$ denote the set of all possible types, and let $\gamma_i = (v_i, c_i)$ represent the type of client $i$.
The dual private nature of $v_i$ and $c_i$ induces a multi-parameter mechanism design problem.
We further assume each client’s type $(v_i,c_i)$, $i\in N$, is independently drawn from a finite-support joint discrete distribution $\Delta_i$, which is privately known to the provider.
This is because the provider can infer these distributions from the previous historical transaction data.
Let $\lambda_i^{l,k}$ denote the probability that client $i$'s type is realized as $(v^k, c^l)$.
Therefore, we have $\sum_{k,l}\lambda_i^{l,k}= 1$ for all $i\in N$.

\subsection{Contract}
\label{sec:contracts}

To facilitate the repurchasing of computing resources, we introduce a novel contract-based framework that empowers providers to incentivize clients to contribute their idle resources through tailored economic rewards. 
Informally, a contract specifies a set of contract items, each containing a recommended repurchasing amount and a payment price. 
Clients have the option to sign the contract and select one of items based on their types. Alternatively, they may choose not to sign the contract. In this case, they are not required to return any resources and will not receive any payment.

\begin{definition}[Contract]\label{def:contract}
A \emph{contract}  is defined as a pair of functions $(x, p)$, where $x: \Gamma \to \RR_+$ is an allocation rule that maps each type to a recommended repurchasing amount, and $p: \Gamma \to \RR_+$ is a payment rule that maps each type to a non-negative payment.
\end{definition}

For convenience, we slightly abuse the notation and use $x_{v^k}^{c^l}$ and $p_{v^k}^{c^l}$ to denote the recommended repurchasing amount and payment for type $(v^k, c^l)$, respectively. That is, $x_{v^k}^{c^l} \deq x(v^k, c^l)$ and $p_{v^k}^{c^l} \deq p(v^k, c^l)$.
Since the type set $\Gamma$ is the domain of the allocation and payment rules of the contract, we refer to each type $\gamma \in \Gamma$ as a \emph{contract item} in this context.
In contract design, we only consider contracts where all payments are non-negative, i.e., $p_{v^k}^{c^l} \geq 0$ for all $(v^k, c^l) \in \Gamma$, as these serve as compensation for clients who return resources.

\begin{figure}[H]
  \centering
 \includegraphics[width=0.8\textwidth]{./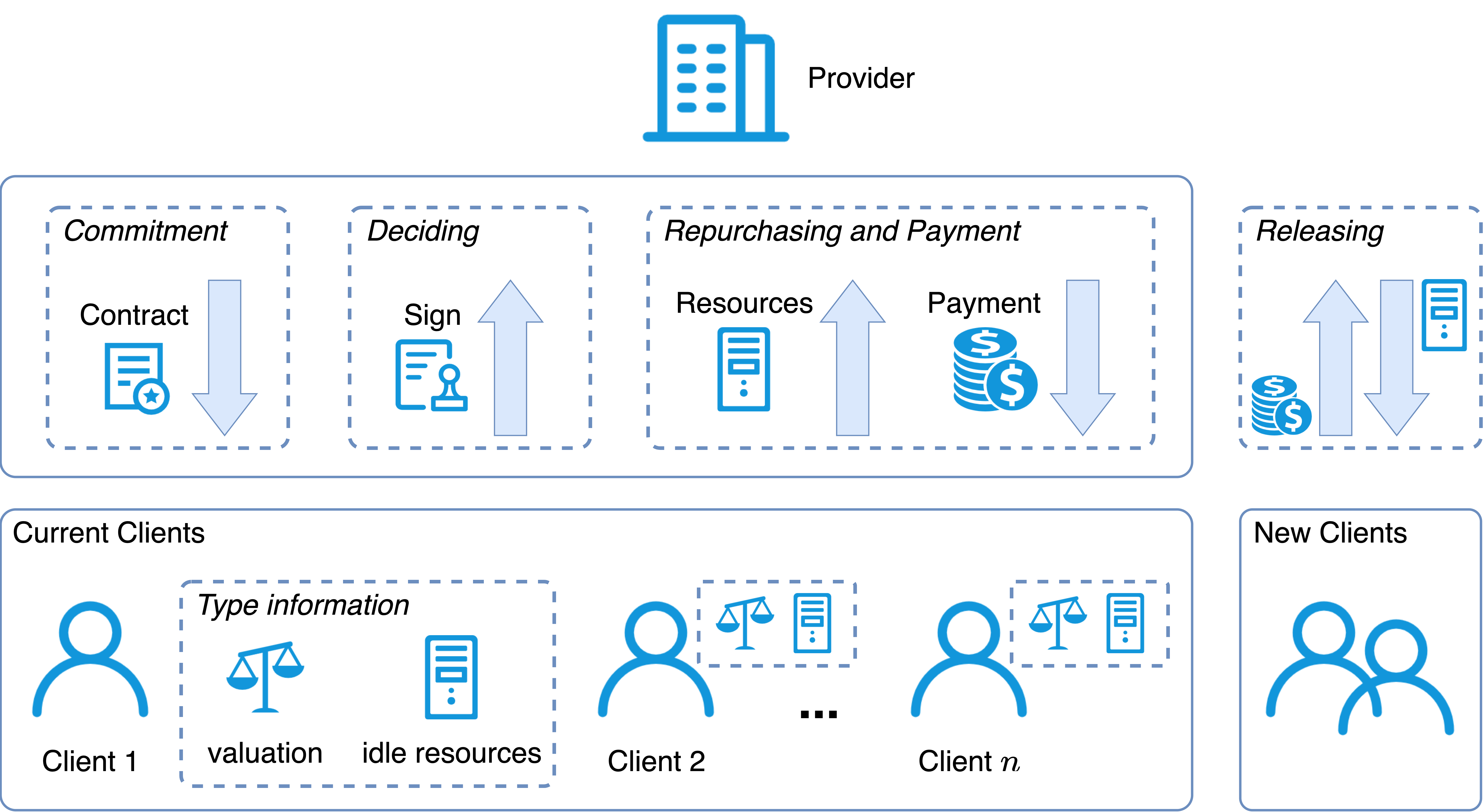}
  \caption{The interactions between the provider and the clients.}
   \label{interactions}  
\end{figure}

The procedure of the interactions between the provider and the clients is as follows (as showns in \cref{interactions}):

 \begin{itemize}
    \item \emph{Commitment.} 
    The provider broadcasts the contract $(x,p)$ to the current clients $\set*{1,2,\cdots,n}$.
    \vspace{0.5em}
    \item \emph{Deciding.} 
    Upon receiving the proposed contract, clients evaluate it based on their true types $\gamma_i = (v_i, c_i)$ and decide whether to sign the contract with the provider. 
    If they choose to sign, they will strategically select one contract item $\gamma_i' = (v_i', c_i')$ to optimize their utility $U_i(\gamma_i'; \gamma_i)$, defined as:
    \begin{eqnarray*}
        U_i(\gamma_i';\gamma_i) = p(\gamma_i') - v_ix(\gamma_i').
    \end{eqnarray*}
    The chosen item $\gamma_i'$ does not need to be the client's true type, but it must satisfy $x(\gamma_i') \leq c_i$, meaning the selected recommended repurchasing amount cannot exceed the client's idle resource capacity.
    Once a contract is signed and an item is chosen, the provider and the client are obligated to strictly adhere to the terms specified in the contract item.
    \vspace{0.5em}
    \item \emph{Repurchasing and Payment.} 
    The client return $x(\gamma_i')$ amount of resource to the provider, and the provider pays $p(\gamma_i')$ to the client.
    \vspace{0.5em}
    \item \emph{Releasing.} After collecting resources from all current clients, the provider shall release them to the new coming clients, and thus obtain corresponding utility, which is then defined as:
    \begin{equation*}
        \begin{aligned}
            &U_P(\gamma_1',\gamma_2',\cdots,\gamma_n')\\
            =&\alpha \cdot \sum_{i \in N} x(\gamma_i') - \sum_{i \in N} p(\gamma_i') + M \cdot \min\left\{0, \sum_{i \in N} x(\gamma_i') - D\right\}
        \end{aligned}
    \end{equation*}
    where $\alpha$ is an exogenous variable, representing the rental price per unit of resource, and $M$ denotes the penalty coefficient applied to each unit of shortfall in resources below the supply lower bound $D$.
\end{itemize}

\subsection{Problem Formulation}
The provider aims to design a contract $(\textbf{x,p})$ that maximizes her expected utility while ensuring several desired properties, including resource feasibility (\cref{def:resource_feasibility}), incentive compatibility (\cref{def:IC1}), and individual rationality (\cref{def:IR}).

The resource feasibility condition ensures that the recommended repurchasing amount does not exceed the client's idle resource capacity when the client selects the contract item that matches her true type.

\begin{definition}[Resource Feasibility]\label{def:resource_feasibility}
A contract is \emph{resource feasible} if, for every contract item \((v^k, c^l)\), the repurchased amount \(x_{v^k}^{c^l}\) does not exceed the available capacity \(c^l\). Specifically, the following condition must hold:
\begin{align}\label{eq:resource_feasibility}
    x_{v^k}^{c^l} \leq c^l, \quad \forall c^l \in C, \quad v^k \in V.
\end{align}
\end{definition}

The resource greedy condition ensures that, for the same private valuation, types with greater capacity are recommended more resources. Additionally, to encourage the return of more resources, the mechanism requires that as many resources as possible be repurchased, for each type $\gamma$.

\begin{definition}[Resource Greedy]\label{def:RG}
    A contract is \emph{resource greedy} if given a valuation $v^k\in V$, the following two conditions hold for all types $(v^k,c^l)$, $\forall c^l\in C$:
    \begin{itemize}
        \item \emph{Monotonicity in Capacity.}
        For any two capacity \(c^l > c^{l'}\), the repurchased amount from the higher-capacity type must be at least as large as that from the lower-capacity type:
        \begin{align}
        x_{v^k}^{c^l} \geq x_{v^k}^{c^{l'}}, \quad\forall c^l>c^{l'}, v^k\in V. \label{rg.1}
        \end{align}
        \item \emph{Maximal Recycling for Dominated Types.}
        If a higher-capacity type \(c^l\) is repurchased strictly more than a lower-capacity type \(c^{l'}\), i.e., \(x_{v^k}^{c^l} > x_{v^k}^{c^{l'}}\), then the repurchased amount from the lower-capacity type must equal its full capacity \(c^{l'}\):
        \begin{align}
        x_{v^k}^{c^{l'}} = c^{l'}, \quad\forall c^l,c^{l'}\in C, v^k\in V, \textnormal{ s.t. } x_{v^k}^{c^l}>x_{v^k}^{c^{l'}}. \label{rg.2}
        \end{align}
    \end{itemize}
\end{definition}

The incentive compatibility condition ensures that clients prefer the contract items specifically designed for their true types.

\begin{definition}[Incentive Compatibility]\label{def:IC1}
A contract is \emph{incentive compatible} if it is resource feasible, and clients achieve the maximum utility by selecting the contract item that matches their true type, i.e.,
\begin{align}
    p_{v^k}^{c^l} - v^k x_{v^k}^{c^l} &\geq p_{v^{k'}}^{c^{l'}} - v^k x_{v^{k'}}^{c^{l'}}, & \forall v^k, v^{k'} \in V,c^l,c^{l'}\in C, \textnormal{ s.t. } x_{v^{k'}}^{c^{l'}} \leq c^{l}. \label{6}
\end{align}
\end{definition}

The inequality in \cref{def:IC1} simultaneously accounts for the misreporting of both valuation and capacity.
\cref{lem:IC2} shows that this can be decoupled into two separate inequalities, each focusing on a single parameter. We thus respectively name them as \emph{the incentive compatibility w.r.t the valuation and the capacity}.

\begin{lemma}[Equivalence of Incentive Compatibility]\label{lem:IC2}
A contract is incentive compatible if and only if  it is resource feasible, resource greedy and satisfies the following conditions:
\begin{align}
    p_{v^k}^{c^l} - v^k x_{v^k}^{c^l} &\geq p_{v^{k'}}^{c^l} - v^k x_{v^{k'}}^{c^l}, & \forall v^k, v^{k'} \in V,c^l \in C. \label{4}\\
    p_{v^k}^{c^l} - v^k x_{v^k}^{c^l} &\geq p_{v^k}^{c^{l'}} - v^k x_{v^k}^{c^{l'}}, & \forall v^k \in V,c^l,c^{l'}\in C \textnormal{ s.t. } x_{v^k}^{c^{l'}} \leq c^{l}. \label{5}
\end{align}
\end{lemma}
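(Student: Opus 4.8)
The plan is to prove the two implications of the equivalence in turn; the ``if'' direction is a short chaining argument over one-dimensional constraints, while the ``only if'' direction is obtained mostly by specialising \eqref{6}.

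For the ``if'' direction, assume the contract is resource feasible, resource greedy, and satisfies \eqref{4} and \eqref{5}, fix a true type \((v^k,c^l)\), and take an arbitrary contract item \((v^{k'},c^{l'})\) meeting the side condition \(x_{v^{k'}}^{c^{l'}}\le c^l\) of \eqref{6}. I would split on the order of \(c^l\) and \(c^{l'}\). If \(c^l\ge c^{l'}\), resource feasibility gives \(x_{v^k}^{c^{l'}}\le c^{l'}\le c^l\), so \eqref{5} applies to the step \((v^k,c^l)\to(v^k,c^{l'})\) and \eqref{4} (which is unconditional at the fixed capacity \(c^{l'}\)) applies to the step \((v^k,c^{l'})\to(v^{k'},c^{l'})\); summing the two inequalities gives \eqref{6}. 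If \(c^l<c^{l'}\), I would first use the resource greedy conditions: \eqref{rg.1} gives \(x_{v^{k'}}^{c^{l'}}\ge x_{v^{k'}}^{c^l}\), and were this strict, \eqref{rg.2} would force \(x_{v^{k'}}^{c^l}=c^l\), contradicting \(x_{v^{k'}}^{c^{l'}}\le c^l\); hence \(x_{v^{k'}}^{c^{l'}}=x_{v^{k'}}^{c^l}\). With the allocations equal, \eqref{5} at valuation \(v^{k'}\) (admissible because \(x_{v^{k'}}^{c^{l'}}\le c^l\)) gives \(p_{v^{k'}}^{c^l}\ge p_{v^{k'}}^{c^{l'}}\), and chaining this with \eqref{4} at capacity \(c^l\) for \((v^k,c^l)\to(v^{k'},c^l)\) again yields \eqref{6}.

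For the ``only if'' direction, assume the contract is incentive compatible. Resource feasibility is part of \cref{def:IC1}. Inequality \eqref{4} is the instance of \eqref{6} obtained by putting \(c^{l'}=c^l\) (its side condition then holds automatically by resource feasibility), and \eqref{5} is the instance obtained by putting \(v^{k'}=v^k\). To obtain the resource greedy conditions I would fix a valuation \(v^k\) and two capacities \(c^l>c^{l'}\) and examine the IC inequalities between the two same-valuation types \((v^k,c^l)\) and \((v^k,c^{l'})\) in both directions — using resource feasibility to certify that the downward deviation, and the upward one when the allocations happen to be ordered the other way, are admissible — so as to pin down the sign of \(x_{v^k}^{c^l}-x_{v^k}^{c^{l'}}\) and the value of \(x_{v^k}^{c^{l'}}\) when the two allocations differ.

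I expect this last part — recovering the resource greedy conditions from incentive compatibility — to be the main obstacle, since the pair of same-valuation IC constraints only pins down the payment gap \(p_{v^k}^{c^l}-p_{v^k}^{c^{l'}}\) as a function of the allocation gap, so a non-monotone allocation is not automatically inconsistent with \eqref{6}; making the argument go through requires carefully tracking which deviations the side condition of \eqref{6} actually permits and, if incentive compatibility alone does not force greediness, restricting attention to the normalised contracts a rational provider would ever post, which can always be taken resource greedy without loss of utility. Modulo this point, the rest is the routine two-case chaining described above.
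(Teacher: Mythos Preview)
Your ``if'' direction is correct and essentially the same as the paper's: both chain \eqref{5} with \eqref{4}, the only cosmetic difference being the case split --- you branch on whether $c^l\ge c^{l'}$ or $c^l<c^{l'}$, while the paper branches on whether $x_{v^k}^{c^{l'}}\le c^l$ or not --- and both splits route to the same two sub-arguments (in the ``hard'' case, use resource greediness to force $x_{v^{k'}}^{c^{l'}}=x_{v^{k'}}^{c^l}$, then compare payments via \eqref{5} and finish with \eqref{4}).

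Your hesitation about the ``only if'' direction is justified, and in fact more careful than the paper itself, whose entire proof of that direction is the sentence ``The only if part is straightforward.'' The paper never attempts to derive resource greediness from \eqref{6}, and indeed it cannot be done: take $V=\{v\}$, $C=\{c^1,c^2\}$ with $0<c^1<c^2$, set $x_v^{c^1}=c^1$, $x_v^{c^2}=0$, $p_v^{c^1}=vc^1$, $p_v^{c^2}=0$; every admissible deviation yields utility zero, so the contract is incentive compatible in the sense of \cref{def:IC1}, yet $x_v^{c^2}<x_v^{c^1}$ violates \eqref{rg.1}. So do not pursue the plan of ``carefully tracking which deviations the side condition of \eqref{6} actually permits'' --- it will not force greediness. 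Your alternative, to regard resource greediness as a harmless normalisation imposed without loss of optimality rather than as a consequence of IC, is the right reading; the paper only ever uses the ``if'' direction downstream (to certify that the contracts it constructs are IC), and that direction stands.
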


The full proof for \cref{lem:IC2} is left in \cref{proof:lem:IC2}.

\begin{definition}[Individual Rationality]\label{def:IR}
A contract is \emph{individually rational} if it satisfies the following condition: each client achieves a non-negative utility when she signs the contract corresponding to her true types, that is
\begin{align}
    p_{v^k}^{c^l} - v^k x_{v^k}^{c^l} &\geq 0, & \forall v^k\in V, c^l\in C. \label{7}
\end{align}
\end{definition}

In this paper, our goal is to design a contract which simultaneously satisfies the properties of resource feasibility and resource greediness, as well as incentive compatibility and individual rationality. For the sake of convenience, we refer to  such a contract as a \emph{feasible} contract. Therefore, given a feasible contract $(\textbf{x,p})$, the expected utility of the provider can be expressed as
\begin{align}\label{EUP}
    &\mathbb E_{\forall i, \gamma_i \sim \Delta_i}[U_P(\gamma_1,\gamma_2,\cdots,\gamma_n)] \nonumber\\
    =& \sum_{i=1}^n \sum_{l=1}^L \sum_{k=1}^K \lambda_{i}^{l,k} \left( \alpha x_{v^k}^{c^l} - p_{v^k}^{c^l} \right) 
    + M \cdot \min\left\{0, \left( \sum_{i=1}^n \sum_{l=1}^L \sum_{k=1}^K \lambda_{i}^{l,k} x_{v^k}^{c^l} - D \right) \right\}. 
\end{align}

Hence, the optimal contract design can be formulated as the following optimization problem:
\begin{align}
\max_{(x,p)} \quad & \Ex_{\forall i, \gamma_i \sim \Delta_i}[U_P(\gamma_1,\gamma_2,\cdots,\gamma_n)] & \text{(Expected utility)} \label{program}\\
\text{s.t.} \quad & \eqref{eq:resource_feasibility},\eqref{rg.1},\eqref{rg.2},\eqref{6},\eqref{7}. & \text{(Feasibility constraints)}\nonumber
\end{align}

\section{Optimal Contract Design for Repurchasing Computing Resources Problem}\label{sec:algorithm}

\subsection{Characterization of Feasible Contracts}

In this section, we temporarily ignore the objective function of the optimization problem and characterize the feasible contracts in advance. 
\cref{thm:feasible} provides a comprehensive characterization of a feasible contract.

\begin{theorem}[Feasible Contract]\label{thm:feasible}
A contract is feasible if and only if the following conditions hold:
\begin{itemize}[left=2em]
    \item[(P1)] Resource feasibility: $x_{v^k}^{c^l} \leq c^l$ for all $v^k \in V$ and $c^l \in C$.
    \vspace{0.5em}
    \item[(P2)] The recommended repurchasing amount decreases as the valuation increases, i.e., $c^l \geq x_{v^p}^{c^l} \geq x_{v^q}^{c^l} \geq 0$ for all $v^q \geq v^p$ and $c^l \in C$.
    \vspace{0.5em}
    \item[(P3)] The payment satisfies the squeeze inequality: $v^p (x_{v^p}^{c^l}-x_{v^q}^{c^l}) \leq p_{v^p}^{c^l} - p_{v^q}^{c^l} \leq v^{q} (x_{v^p}^{c^l}-x_{v^q}^{c^l})$ for all $v^p, v^q \in V$ and $c^l \in C$.
    \vspace{0.5em}
    \item[(P4)] Incentive compatibility w.r.t. the capacity: $p_{v^k}^{c^l} - v^k x_{v^k}^{c^l} \geq p_{v^k}^{c^{l'}} - v^k x_{v^k}^{c^{l'}}$ for all $v^k \in V,c^l,c^{l'}\in C$ subject to $x_{v^k}^{c^{l'}} \leq c^{l}$.
    \vspace{0.5em}
    \item[(P5)] The utility of the client having the highest valuation is non-negative: $p_{v^K}^{c^l} - v^K x_{v^K}^{c^l} \geq 0$ for all $c^l \in C$.
    \vspace{0.5em}
    \item[(P6)] The recommended repurchasing amount increases as the capacity increases (i.e., $\forall c^l>c^{l'}$, $x_{v^k}^{c^l}\geq x_{v^k}^{c^{l'}}$), and if $x_{v^k}^{c^l}> x_{v^k}^{c^{l'}}$, $ x_{v^k}^{c^{l'}}=c^{l'}$.
\end{itemize}
\end{theorem}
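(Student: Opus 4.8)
The plan is to read everything off the incentive-compatibility equivalence in \cref{lem:IC2}. By that lemma a contract is feasible exactly when it is resource feasible \eqref{eq:resource_feasibility}, resource greedy (\eqref{rg.1}--\eqref{rg.2}), individually rational \eqref{7}, and satisfies incentive compatibility with respect to the valuation \eqref{4} and with respect to the capacity \eqref{5}. Three of the six target conditions then match verbatim with lemma conditions: (P1) is resource feasibility, (P4) is \eqref{5}, and (P6) is resource greediness. The whole statement therefore collapses to a single claim: the pair $\{\eqref{4},\,\eqref{7}\}$ is equivalent to the triple $\{\text{(P2)},\text{(P3)},\text{(P5)}\}$ (in the presence of (P1)). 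The two directions of the argument establish exactly this, and both are carried out one capacity level $c^l$ at a time, i.e.\ inside the valuation-only subproblem.

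For $\{\eqref{4},\,\eqref{7}\}\Rightarrow\{\text{(P2)},\text{(P3)},\text{(P5)}\}$: writing \eqref{4} for the ordered pair $(v^p,v^q)$ and rearranging yields $p_{v^p}^{c^l}-p_{v^q}^{c^l}\ge v^p(x_{v^p}^{c^l}-x_{v^q}^{c^l})$, while writing \eqref{4} for the swapped pair $(v^q,v^p)$ yields $p_{v^p}^{c^l}-p_{v^q}^{c^l}\le v^q(x_{v^p}^{c^l}-x_{v^q}^{c^l})$; together these are precisely (P3). Chaining the two bounds gives $(v^q-v^p)(x_{v^p}^{c^l}-x_{v^q}^{c^l})\ge 0$, which for $v^q>v^p$ is the monotonicity asserted in (P2); the flanking bounds $0\le x_{v^p}^{c^l}\le c^l$ come for free from \cref{def:contract} and (P1). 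Finally, (P5) is just \eqref{7} instantiated at the top valuation $v^K$.

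For the converse $\{\text{(P2)},\text{(P3)},\text{(P5)}\}\Rightarrow\{\eqref{4},\,\eqref{7}\}$: condition \eqref{4} is immediate, since $p_{v^k}^{c^l}-v^kx_{v^k}^{c^l}\ge p_{v^{k'}}^{c^l}-v^kx_{v^{k'}}^{c^l}$ is, after rearrangement, exactly the left half of (P3) read with $(v^p,v^q)=(v^k,v^{k'})$, and (P3) is assumed for every ordered pair. To lift (P5) to full individual rationality \eqref{7}, I would show that the per-type utility $U^{c^l}(v^k):=p_{v^k}^{c^l}-v^kx_{v^k}^{c^l}$ is nonincreasing in $v^k$ at each fixed $c^l$: for $v^k<v^{k'}$, \eqref{4} gives $U^{c^l}(v^k)\ge p_{v^{k'}}^{c^l}-v^kx_{v^{k'}}^{c^l}$, and because $x_{v^{k'}}^{c^l}\ge 0$ and $v^{k'}>v^k$ the right-hand side is at least $p_{v^{k'}}^{c^l}-v^{k'}x_{v^{k'}}^{c^l}=U^{c^l}(v^{k'})$. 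Taking $v^{k'}=v^K$ and invoking (P5) gives $U^{c^l}(v^k)\ge U^{c^l}(v^K)\ge 0$ for every $k$, which is \eqref{7}. Combined with (P1), (P4), (P6), \cref{lem:IC2} then certifies feasibility, completing the converse.

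The manipulations are elementary once the reduction through \cref{lem:IC2} is in place, so the real work is bookkeeping rather than insight. Two points deserve care: (i) (P3) must be used as a statement about \emph{every} ordered pair $(v^p,v^q)$, so that its two halves encode \eqref{4} in both directions --- this incidentally shows that (P2) is already implied by (P1) and (P3), hence is never needed in the converse direction and only has to be proved in the forward one; and (ii) the apparently weaker requirement (P5) --- individual rationality only at the highest valuation --- nevertheless forces \eqref{7} at all types, through the monotonicity of $U^{c^l}(\cdot)$, which relies only on $x\ge 0$ together with \eqref{4}. I expect this individual-rationality propagation to be the main, though still mild, obstacle; the rest is a direct translation between equivalent inequality forms.
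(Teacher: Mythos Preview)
Your proposal is correct and follows essentially the same route as the paper: the paper also reduces everything through \cref{lem:IC2} and then packages your two remaining equivalences as \cref{lem:ic} (which is exactly your argument that \eqref{4} is equivalent to (P2)+(P3) via the two rearranged IC inequalities) and \cref{lem:ir} (which is exactly your utility-monotonicity argument that (P5) implies \eqref{7}). Your observation that (P2) is already forced by (P1) and (P3) is a nice extra remark not made explicit in the paper.
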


Before proving \cref{thm:feasible}, the following two lemmas are necessary.

\begin{lemma}\label{lem:ic}
A contract is incentive compatible w.r.t the valuation (\eqref{4} is satisfied), if and only if it satisfies properties (P2) and (P3) in \cref{thm:feasible}.
\end{lemma}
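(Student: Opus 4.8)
The plan is to prove the biconditional in \cref{lem:ic} by showing that \eqref{4} (incentive compatibility w.r.t.\ the valuation) is equivalent to the conjunction of (P2) and (P3), working separately in each direction and fixing an arbitrary capacity level $c^l$ throughout, since both \eqref{4} and the two properties are stated per fixed $c^l$. Write $x^{c^l}_{v^k}=:x_k$ and $p^{c^l}_{v^k}=:p_k$ to declutter; then \eqref{4} reads $p_k - v^k x_k \ge p_{k'} - v^k x_{k'}$ for all $k,k'$.

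For the forward direction, assume \eqref{4}. To get (P2), take any $v^p < v^q$ (so $p,q$ indices with $v^p<v^q$); write the IC inequality once with true type $v^p$ deviating to $v^q$ and once with true type $v^q$ deviating to $v^p$:
\begin{align*}
p_p - v^p x_p &\ge p_q - v^p x_q,\\
p_q - v^q x_q &\ge p_p - v^q x_p.
\end{align*}
Adding these yields $(v^q - v^p)(x_p - x_q)\ge 0$, and since $v^q>v^p$ we conclude $x_p \ge x_q$, i.e.\ the repurchased amount is nonincreasing in valuation; combined with resource feasibility and nonnegativity of $x$ (already part of the contract definition / \eqref{eq:resource_feasibility}) this is exactly (P2). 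For (P3), rearrange the same two inequalities individually: the first gives $p_p - p_q \ge v^p(x_p - x_q)$, and the second gives $p_p - p_q \le v^q(x_p - x_q)$, which is precisely the squeeze inequality in (P3) (with the roles of $p,q$ matching the statement). This direction is essentially the classical ``IC $\Rightarrow$ monotonicity + payment sandwich'' argument and should be short.

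For the reverse direction, assume (P2) and (P3) and derive \eqref{4} for an arbitrary ordered pair of valuations. Fix true index $k$ and deviation index $k'$; we must show $p_k - v^k x_k \ge p_{k'} - v^k x_{k'}$, i.e.\ $p_k - p_{k'} \ge v^k(x_k - x_{k'})$. Split into the two cases $v^{k'}<v^k$ and $v^{k'}>v^k$ (the case $k=k'$ is trivial). If $v^{k'} < v^k$: by (P2), $x_{k'} \ge x_k$, so $x_k - x_{k'} \le 0$; apply the left half of the squeeze inequality (P3) with the higher valuation $v^k$ playing the role of $v^q$ and $v^{k'}$ the role of $v^p$, which gives $p_{k'} - p_k \le v^{k'}(x_{k'} - x_k)$, hence $p_k - p_{k'} \ge v^{k'}(x_k - x_{k'}) \ge v^k(x_k - x_{k'})$, the last step because $x_k - x_{k'}\le 0$ and $v^{k'}<v^k$. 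If $v^{k'} > v^k$: by (P2), $x_{k'} \le x_k$, so $x_k - x_{k'}\ge 0$; apply the right half of (P3) with $v^{k'}$ as $v^q$ and $v^k$ as $v^p$ to get $p_k - p_{k'} \ge v^k(x_k - x_{k'})$ directly. In both cases \eqref{4} holds, completing the equivalence.

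I do not expect a genuine obstacle here --- the content is the standard one-dimensional screening characterization applied coordinate-wise in $v$ with $c^l$ held fixed. The one place to be careful is bookkeeping: making sure the inequality directions in (P3) are matched to the correct ordering of the valuation indices (the statement of (P3) is phrased for a generic pair $v^p,v^q$ without assuming $v^p<v^q$, so one should note the squeeze inequality is symmetric under swapping $p\leftrightarrow q$ and invoke whichever orientation is convenient), and confirming that telescoping is \emph{not} needed --- (P3) is assumed for \emph{all} pairs, not just adjacent ones, so each required IC inequality follows from a single application rather than a chain. I would also explicitly remark that resource feasibility $x^{c^l}_{v^k}\le c^l$ and $x^{c^l}_{v^k}\ge 0$ are carried along from \cref{def:contract} and \eqref{eq:resource_feasibility} so that (P2)'s two-sided bound $c^l \ge x^{c^l}_{v^p}\ge x^{c^l}_{v^q}\ge 0$ is fully accounted for.
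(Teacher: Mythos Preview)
Your approach is correct and matches the paper's: add the two IC inequalities to obtain monotonicity (P2), and rearrange each one individually to obtain the squeeze (P3). The paper's own proof is in fact terser than yours --- it only writes out the forward direction explicitly and then asserts that the rearranged inequalities are ``equivalent to'' (P3), leaving the converse implicit --- so your explicit reverse-direction argument is a welcome addition. One bookkeeping slip to fix (exactly the kind you anticipated): in Case~1 ($v^{k'}<v^k$), the left half of (P3) with $p=k'$, $q=k$ actually reads $p_{k'}-p_k \ge v^{k'}(x_{k'}-x_k)$, not $\le$, so your chain does not go through as written; instead apply (P3) with $p=k$, $q=k'$ and the left half immediately gives $p_k - p_{k'} \ge v^k(x_k - x_{k'})$, with no intermediate step or appeal to (P2) needed.
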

The full proof for \cref{lem:ic} is left in \cref{ic proof}.

\begin{lemma}\label{lem:ir}
    For an incentive compatible contract, if (P5) in \cref{thm:feasible} is satisfied, then this contract must be individual rational. 
    %A contract is incentive compatible, and individual rationality (i.e., \eqref{6} and \eqref{7}), if and only if it meets property (P5) as stated in \cref{thm:feasible}.
\end{lemma}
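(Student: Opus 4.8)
The plan is to fix an arbitrary type $(v^k, c^l)$ with $k < K$ and show that its utility $U^{c^l}_{v^k} \deq p_{v^k}^{c^l} - v^k x_{v^k}^{c^l}$ is non-negative, given that $U^{c^l}_{v^K} \geq 0$ by (P5) and that the contract is incentive compatible. The natural device is a telescoping argument along the valuation index at a \emph{fixed} capacity $c^l$: I will walk from $v^k$ up to $v^K$ one step at a time and show the utility never decreases along the way, so $U^{c^l}_{v^k} \geq U^{c^l}_{v^{k+1}} \geq \cdots \geq U^{c^l}_{v^K} \geq 0$.

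First I would establish the single-step inequality $U^{c^l}_{v^k} \geq U^{c^l}_{v^{k+1}}$. Since all these contract items share the same capacity $c^l$, and by (P1)/(P2) we have $x_{v^{k+1}}^{c^l} \leq c^l$, the item $(v^{k+1}, c^l)$ is a feasible choice for the type $(v^k, c^l)$; hence the IC constraint w.r.t.\ the valuation, inequality~\eqref{4}, applies and gives
\[
p_{v^k}^{c^l} - v^k x_{v^k}^{c^l} \;\geq\; p_{v^{k+1}}^{c^l} - v^k x_{v^{k+1}}^{c^l}.
\]
Now I compare the right-hand side to $U^{c^l}_{v^{k+1}} = p_{v^{k+1}}^{c^l} - v^{k+1} x_{v^{k+1}}^{c^l}$: the difference is $(v^{k+1} - v^k) x_{v^{k+1}}^{c^l} \geq 0$, because $v^{k+1} > v^k$ and $x_{v^{k+1}}^{c^l} \geq 0$ by (P2) (equivalently, payments and allocations are non-negative by \cref{def:contract}). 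Therefore $p_{v^{k+1}}^{c^l} - v^k x_{v^{k+1}}^{c^l} \geq p_{v^{k+1}}^{c^l} - v^{k+1} x_{v^{k+1}}^{c^l} = U^{c^l}_{v^{k+1}}$, and chaining the two displayed facts yields $U^{c^l}_{v^k} \geq U^{c^l}_{v^{k+1}}$.

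Iterating this single step from index $k$ up to $K$ gives $U^{c^l}_{v^k} \geq U^{c^l}_{v^K}$, and (P5) (i.e.\ \eqref{7} restricted to the highest valuation) gives $U^{c^l}_{v^K} = p_{v^K}^{c^l} - v^K x_{v^K}^{c^l} \geq 0$. Since $c^l \in C$ and $k$ were arbitrary, \eqref{7} holds for every type, so the contract is individually rational. I do not anticipate a genuine obstacle here; the only point requiring a little care is confirming that~\eqref{4} is legitimately invoked at each step — that is, that deviating within a fixed capacity column is always resource-feasible, which holds precisely because \eqref{4} is stated without any side condition (unlike \eqref{5}), so no capacity bookkeeping is needed. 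If one instead wanted to route the argument through \cref{thm:feasible}, properties (P2) and (P3) would furnish the same monotonicity of utility in the valuation, but invoking \eqref{4} directly is the cleanest path.
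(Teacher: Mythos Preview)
Your proof is correct and takes essentially the same approach as the paper: use incentive compatibility at a fixed capacity level to show that the utility $p_{v^k}^{c^l} - v^k x_{v^k}^{c^l}$ is monotonically non-increasing in $k$, and then invoke (P5) to anchor the chain at $U^{c^l}_{v^K}\ge 0$. The only cosmetic difference is that the paper compares two arbitrary valuations $v^p\le v^q$ in a single step, whereas you telescope over adjacent indices; the underlying inequality ($p_{v^q}^{c^l}-v^p x_{v^q}^{c^l}\ge p_{v^q}^{c^l}-v^q x_{v^q}^{c^l}$) is identical.
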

The full proof for \cref{lem:ir} is left in \cref{ir proof}.

Combining \cref{lem:ic}, \cref{lem:ir} and \cref{lem:IC2}, directly leads to \cref{thm:feasible}.

\subsection{Optimal Contract Design}

The goal of this paper is to design a feasible contract that maximizes the provider's utility. We solve this problem in two steps. In the first step, we first propose the optimal payment rule $p^*(x)=(p_k^*(x))$ for any $x=(x_{v^1},\cdots,x_{v^K})$, which is a function of a given allocation $x$.
In the second step, we substitute the payment rule $p^*(x)$ into programming \eqref{program}, resulting in a new programming whose variables are $x$.
By solving this programming, we ultimately obtained the optimal contract, which includes the optimal allocation rule $x^*$,  and, consequently, the optimal payment rule $p^*(x^*)$.

Based on this approach, we first consider a simpler scenario in which all clients have the same capacity $c$, and prove program \eqref{program} is equivalent to a linear programming problem. However, for the more complicated scenario where there are multiple capacity values, non-convex complementary constraints are additionally needed to ensure the property of resource greedyness. To handle these nonlinear constraints, we introduce slack variables to transform them into tractable smooth constraints, and then gradually approximate the original program.

\subsubsection{Scenario: $|C|=1$.}
In this case, all current clients have the same capacity $c$, and then we omit the superscript for convenience. The optimal payment rule is proposed in \cref{optimal payment-1}.
\begin{proposition} \label{optimal payment-1}
Suppose all clients have the same capacity. 
Then under a feasible contract, if its allocation profile is $x=(x_{v^1},\cdots,x_{v^K})$, the corresponding optimal payment $p^*_k(x)$ is 
\begin{align}
    p_k^*(x) = 
\begin{cases} 
v^{K} x_{v^{K}} - \sum_{j=k}^{K-1} v^j (x_{v^{j+1}} - x_{v^j}), &  k \leq K \\
v^{K} x_{v^{K}}, &k = K
\end{cases}\label{16}
\end{align} 

\end{proposition}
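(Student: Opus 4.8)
The plan is to establish two facts: that $p^*(x)$ is the coordinatewise-minimal payment among all payments making $(x,p)$ feasible, and that $(x,p^*(x))$ is itself feasible. Together these immediately yield optimality for the provider, by the following reduction, which is the first step. Since the allocation $x$ is held fixed and the penalty term in \eqref{EUP} depends on $x$ only, the provider's expected utility equals an $x$-dependent constant minus $\sum_{k=1}^{K}\mu_k\,p_{v^k}$, where $\mu_k=\sum_{i=1}^{n}\lambda_i^{1,k}\ge 0$ (we drop the capacity superscript throughout since $|C|=1$). Hence any feasible payment whose every coordinate $p_{v^k}$ is as small as possible maximizes this objective, so the whole argument reduces to (i) a lower bound $p_{v^k}\ge p_k^*(x)$ valid for every feasible payment, and (ii) feasibility of $(x,p^*(x))$.

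For step (i) I would run a backward induction on $k$, from $k=K$ down to $k=1$, invoking the characterization in \cref{thm:feasible}, which applies here because, when $|C|=1$, condition (P4) holds trivially (both sides coincide), (P6) is vacuous, and (P1), (P2) constrain only the given $x$. The base case $k=K$ is precisely (P5): $p_{v^K}\ge v^K x_{v^K}=p_K^*(x)$. For the inductive step, the left half of the squeeze inequality (P3) applied to the adjacent valuations $v^k<v^{k+1}$ gives $p_{v^k}\ge p_{v^{k+1}}+v^k(x_{v^k}-x_{v^{k+1}})$; combining this with the inductive hypothesis $p_{v^{k+1}}\ge p_{k+1}^*(x)$ and the recursive identity $p_k^*(x)=p_{k+1}^*(x)+v^k(x_{v^k}-x_{v^{k+1}})$, read off directly from \eqref{16}, yields $p_{v^k}\ge p_k^*(x)$. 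Equivalently, chaining the adjacent inequalities telescopes to $p_{v^k}\ge p_{v^K}+\sum_{j=k}^{K-1}v^j(x_{v^j}-x_{v^{j+1}})\ge p_k^*(x)$.

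For step (ii) I would verify conditions (P1)--(P6) of \cref{thm:feasible} for the pair $(x,p^*(x))$. Conditions (P1), (P2), (P6) hold because they restrict only $x$, which is the allocation of a feasible contract; (P4) holds trivially since $|C|=1$; and (P5) holds with equality, $p_K^*(x)-v^K x_{v^K}=0$. The substantive case is (P3): a short telescoping computation gives, for any $v^p<v^q$, that $p_p^*(x)-p_q^*(x)=\sum_{j=p}^{q-1}v^j(x_{v^j}-x_{v^{j+1}})$, and since (P2) makes every increment $x_{v^j}-x_{v^{j+1}}$ non-negative while $v^p\le v^j\le v^q$ for $p\le j\le q-1$, sandwiching this sum between $v^p\sum_{j=p}^{q-1}(x_{v^j}-x_{v^{j+1}})$ and $v^q\sum_{j=p}^{q-1}(x_{v^j}-x_{v^{j+1}})$ reproduces exactly $v^p(x_{v^p}-x_{v^q})\le p_p^*(x)-p_q^*(x)\le v^q(x_{v^p}-x_{v^q})$ (the cases $v^p=v^q$ and $v^p>v^q$ being trivial or symmetric). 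Taking $q=K$ in the same identity also shows $p_k^*(x)=v^K x_{v^K}+\sum_{j=k}^{K-1}v^j(x_{v^j}-x_{v^{j+1}})\ge 0$, so all payments are non-negative as \cref{def:contract} requires.

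The one point calling for genuine care, as opposed to a telescoping identity or a direct appeal to \cref{thm:feasible}, is confirming that the coordinatewise lower bounds from step (i) are jointly attainable --- that the vector assembled from them is still feasible --- which is precisely where the non-adjacent instances of the squeeze inequality (P3) must be re-established rather than inherited from the adjacent ones used in step (i). Once steps (i) and (ii) are both in hand, $p^*(x)$ minimizes $\sum_{k}\mu_k p_{v^k}$ over all feasible payments compatible with $x$ and hence maximizes the provider's expected utility, which is the claim of the proposition.
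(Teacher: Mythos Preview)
Your argument is correct and follows the same approach as the paper: use (P5) for the base case $p_{v^K}\ge v^K x_{v^K}$, the left half of the squeeze inequality (P3) for the recursive step, and linearity of the objective in the payments to justify taking the coordinatewise minimum via backward induction. The paper's proof is far terser---it records only the lower-bound derivation and the appeal to linearity, leaving implicit the feasibility verification of $(x,p^*(x))$ that you carefully spell out in step~(ii).
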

\begin{proof}
Because the contract is feasible, we have $p_{v^{K}}\geq v^{K} x_{v^{K}}$, by individual rationality constraint. 
In addition, the squeeze inequality in (P3) of \cref{thm:feasible} implies $p_{v^k}\geq p_{v^{k+1}}-v^k(x_{v^{k+1}}-x_{v^k})$.  
Because the objective function in programming \eqref{program} is linear in $\{p_{v^k}\}$, we can obtain the optimal payment rule as \eqref{16} by backward induction.
\end{proof}

Next, to derive the optimal allocation rule, we substitute equation \eqref{16} into equation \eqref{EUP}, which yields
\begin{align}
    &\max_{(x,p)} \mathbb E_{\forall i, \gamma_i \sim \Delta_i}[U_P] \nonumber\\
    =&  \sum_{k=2}^K \sum_{i=1}^n \left[ \lambda_{i}^{k} (\alpha - v^{k}) - \sum_{j=1}^{k-1} \lambda_{i}^{j} (v^{k} - v^{k-1}) \right] x_{v^k}+\sum_{i=1}^n  \lambda_{i}^{1} (\alpha - v^{1})x_{v^1} \notag\\
    &+ M \cdot \min\left\{0, \left( \sum_{i=1}^n \sum_{k=1}^K \lambda_{i}^{k} x_{v^k}- D \right) \right\}. \label{15}\\
    \text{s.t.} \quad & c\geq x_{v^1}\geq \cdots \geq x_{v^k} \cdots \geq x_{v^K}\geq 0    \notag 
\end{align}

To transform this optimization problem into a linear programming problem \eqref{15}, we need to linearize the nonlinear term involving the $min$ function in the objective function. Here's the step-by-step reformulation:

\noindent{\bf Step 1:} Introduce an Auxiliary Variable. Define a new variable $t$ to replace the nonlinear term:
\begin{eqnarray*}
    t=\min\left\{0, \left( \sum_{i=1}^n \sum_{k=1}^K \lambda_{i}^{k} x_{v^k}- D \right) \right\}.
\end{eqnarray*}

\noindent{\bf Step 2:} Linearize the $min$ Function. The $min$ function can be expressed with two linear inequalities:
\begin{eqnarray*}
    t\leq& 0; ~\mbox{and},~
    t\leq& \sum_{i=1}^n \sum_{k=1}^K \lambda_{i}^{k} x_{v^k}- D.
\end{eqnarray*}

Thus, the reformulated linear program is: 
\begin{align}
    &\max_{(x,p)} \mathbb E_{\forall i, \gamma_i \sim \Delta_i}[U_P] \label{reformal}\\ 
    \text{s.t.} \quad 
    &c\geq x_{v^1}\geq \cdots \geq x_{v^k} \cdots \geq x_{v^K}\geq 0\nonumber\\
    & t \leq \sum_{k=1}^K \sum_{i=1}^n \lambda_i^k x_{v^k} - D \nonumber\\
    & t\leq 0 \nonumber
\end{align}

By solving the above program \eqref{reformal}, the optimal allocations of $\{x^*_{v^k}\}$ are achieved.

\subsubsection{Scenario: $|C|>1$.} This case is significantly more complex than the one where $|C|=1$, because the conditions of resource greedyness can not be expressed as linear constraints.
Without loss of generality, assume $C=\left\{c^1, \cdots, c^L\right\}$. Similar to \cref{optimal payment-1}, we can construct an optimal payment rule under the case where $|C|>1$. 
\begin{proposition}
\label{optimal payment-2}
Under a feasible contract, if its allocation profile is $x =(x^{c^l}_{v^k})_{c^l\in C,v^k\in V}$, then the corresponding optimal payment $p^{c^l*}_{v^k}(x)$ is
\begin{align}
  p_{v_k}^{c^l*} = 
\begin{cases} 
v^{K} x_{v^{K}}^{c^l} - \sum_{j=k}^{{K}-1} v^j (x_{v^{j+1}}^{c^l} - x_{v^j}^{c^l}), & \forall k \leq {K} \\
v^{K} x_{v^{K}}^{c^l}, &  k = {K}
\end{cases}, \quad \forall c_l \in C.  \label{17} 
\end{align}
\end{proposition}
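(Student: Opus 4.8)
The plan is to argue exactly as in the proof of \cref{optimal payment-1}, exploiting the fact that, once the allocation $x$ is fixed, the payment variables $\{p^{c^l}_{v^k}\}_{k=1}^{K}$ attached to a single capacity $c^l$ are constrained only by relations that stay inside that capacity level: the individual-rationality bound (P5) of \cref{thm:feasible}, which gives $p^{c^l}_{v^{K}}\ge v^{K}x^{c^l}_{v^{K}}$, and the lower half of the squeeze inequality (P3), which gives $p^{c^l}_{v^k}\ge p^{c^l}_{v^{k+1}}-v^k\bigl(x^{c^l}_{v^{k+1}}-x^{c^l}_{v^k}\bigr)$ for each $k<K$. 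A backward induction on $k$ from $K$ down to $1$ then shows that every payment rule feasible together with $x$ dominates, coordinate-wise, the rule $p^{c^l*}$ defined in \eqref{17}. Since the payments enter the provider's objective \eqref{EUP} only through $-\sum_i\sum_{l,k}\lambda_i^{l,k}p^{c^l}_{v^k}$, i.e.\ with non-positive coefficients, replacing any feasible payment rule by this pointwise-smallest one can only increase the provider's expected utility; hence $p^{c^l*}$ is optimal provided it still yields a feasible contract.

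The bulk of the work is therefore verifying that $(x,p^{c^l*})$ satisfies (P1)--(P6). Properties (P1), (P2) and (P6) concern the allocation alone and are inherited from the original feasible contract, and $p^{c^l*}_{v^k}\ge 0$ because every increment $x^{c^l}_{v^{j+1}}-x^{c^l}_{v^j}$ is non-positive by (P2). A short computation gives the on-path utility $p^{c^l*}_{v^k}-v^k x^{c^l}_{v^k}=\sum_{j=k+1}^{K}(v^j-v^{j-1})x^{c^l}_{v^j}$ (an empty sum, hence $0$, when $k=K$), from which (P5) is immediate; property (P3) follows from this expression together with the monotonicity of $x^{c^l}_{\,\cdot}$ in the valuation, exactly as in \cref{lem:ic}, since \eqref{17} is designed to make all the downward valuation-IC inequalities tight.

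The one genuinely new point --- and the step I expect to be the main obstacle --- is (P4), incentive compatibility with respect to capacity, because \eqref{17} is built capacity by capacity whereas (P4) links distinct capacities. Substituting the on-path utility formula, (P4) becomes $\sum_{j=k+1}^{K}(v^j-v^{j-1})x^{c^l}_{v^j}\ge \sum_{j=k+1}^{K}(v^j-v^{j-1})x^{c^{l'}}_{v^j}$ whenever $x^{c^{l'}}_{v^k}\le c^l$. For $c^l\ge c^{l'}$ this holds termwise by the monotonicity-in-capacity part of (P6). For $c^l<c^{l'}$, the hypothesis $x^{c^{l'}}_{v^k}\le c^l$ together with (P2) yields $x^{c^{l'}}_{v^j}\le x^{c^{l'}}_{v^k}\le c^l$ for every $j\ge k$; if $x^{c^{l'}}_{v^j}>x^{c^l}_{v^j}$ for some such $j$, then \eqref{rg.2} (the maximal-recycling clause of (P6)) forces $x^{c^l}_{v^j}=c^l$, contradicting $x^{c^{l'}}_{v^j}>x^{c^l}_{v^j}=c^l\ge x^{c^{l'}}_{v^j}$; hence $x^{c^{l'}}_{v^j}=x^{c^l}_{v^j}$ for all $j\ge k$ and the two sides coincide. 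This establishes (P4), so $(x,p^{c^l*})$ is a feasible contract and $p^{c^l*}$ is the optimal payment rule for the allocation $x$.
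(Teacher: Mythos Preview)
Your argument is correct and shares the paper's core strategy: set $p^{c^l}_{v^K}=v^K x^{c^l}_{v^K}$ from (P5), then recurse downward via the lower half of (P3), and observe that minimizing payments maximizes the provider's objective. The paper's own proof is considerably more telegraphic: for the multi-capacity case it invokes the cross-capacity constraint \eqref{5} at $v^K$ to check that $p^{c^{l'}}_{v^K}=v^K x^{c^{l'}}_{v^K}$ is consistent across capacities, and then simply iterates ``similarly,'' without ever verifying that the resulting payment rule satisfies (P4) for general $k$. Your explicit feasibility check --- computing the on-path utility $\sum_{j=k+1}^{K}(v^j-v^{j-1})x^{c^l}_{v^j}$ and handling (P4) by the case split on $c^l\gtrless c^{l'}$, using the maximal-recycling clause of (P6) in the harder direction --- is exactly the piece the paper's sketch omits, and it makes your proof the more complete of the two.
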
 
The full proof for \cref{optimal payment-2} is left in \cref{payment proof}. 
Similarly to the case of $|C|=1$, by substituting \eqref{17} into \eqref{EUP}, and replacing the penalty term as $t=\min\left\{0, \left( \sum_{i=1}^n \sum_{k=1}^K\sum_{l=1}^L \lambda_{i}^{l,k} x^{c^l}_{v^k}- D \right) \right\}$, we have the below program.

\begin{align}
    \max_{(x,p)} &\mathbb E_{\forall i, \gamma_i \sim \Delta_i}[U_P] \label{P1}\\
    =\sum_{l=1}^L &\sum_{k=2}^K \sum_{i=1}^n \left[ \lambda_{i}^{l,k} (\alpha - v^{k}) - \sum_{j=1}^{k-1} \lambda_{i}^{l,j} (v^{k} - v^{k-1}) \right] x^{c^l}_{v^k}+\sum_{l=1}^L\sum_{i=1}^n  \lambda_{i}^{l,1} (\alpha - v^{1})x^{c^l}_{v^1}  +Mt\nonumber\\
    \text{s.t.} \quad & c^l\geq x^{c^l}_{v^1}\geq \cdots \geq x^{c^l}_{v^k} \cdots \geq x^{c^l}_{v^K}\geq 0  \quad \forall c^l\in C  \label{monotic}  \\
    &  x^{c^L}_{v^k}\geq \cdots \geq x^{c^l}_{v^k} \cdots \geq x^{c^1}_{v^k}\geq 0  \quad \forall v^k\in V  \label{monotic2} \\
    &(x^{c^{l'}}_{v^k}-x^{c^{l}}_{v^k})(x^{c^{l}}_{v^k}-c^{l})=0 \quad \forall v^k\in V, \quad \forall c^l,c^{l'}\in C \label{definitionredcahnge} \\
     & t \leq \sum_{l=1}^L\sum_{k=1}^K \sum_{i=1}^n \lambda_i^k x^{c^l}_{v^k} - D \label{t1} \\
    & t\leq 0 \label{t2}
\end{align}
in which \eqref{t1} and \eqref{t2} are the constraints for the penalty term, \eqref{monotic} and \eqref{monotic2} ensure the monotonic allocation, and \eqref{definitionredcahnge} ensure the resources greedy in \cref{def:RG}.
Clearly, the complementarity constraint of \eqref{definitionredcahnge} brings an obstacle to solve program \eqref{P1} 

Therefore,  we adopt the method proposed by Fletcher and Leyffer \cite{fletcherSolvingMathematicalPrograms2004a} to transform the complementarity constraints into nonlinear inequality constraints.

Firstly, \eqref{definitionredcahnge} is equivalent to  $(x^{c^{l+1}}_{v^k}-x^{c^{l}}_{v^k})(c^{l}-x^{c^{l}}_{v^k})\leq 0$ and $x^{c^{l+1}}_{v^k}-x^{c^{l}}_{v^k}\geq 0$, $c^{l}-x^{c^{l}}_{v^k}\geq 0$. The latter two have already been guaranteed in the linear constraints. 
This relaxation preserves the core characteristic of the complementarity constraint, which requires that at least one of the two factors is non-positive, while avoiding the restriction that the product must be strictly zero.

Next, we relax \eqref{definitionredcahnge} as $(x^{c^{l+1}}_{v^k}-x^{c^{l}}_{v^k})(c^{l}-x^{c^{l}}_{v^k})\leq \epsilon$, and then the program is transformed into \eqref{program-2}:
\begin{align} 
    &\max_{(x,p)} \mathbb E_{\forall i, \gamma_i \sim \Delta_i}[U_P] \label{program-2}\\
    \text{s.t.} \quad & c^l\geq x^{c^l}_{v^1}\geq \cdots \geq x^{c^l}_{v^k} \cdots \geq x^{c^l}_{v^K}\geq 0  \quad \forall c^l\in C \label{eq:order-valuation} \\
    & x^{c^L}_{v^k}\geq \cdots \geq x^{c^l}_{v^k} \cdots \geq x^{c^1}_{v^k}\geq 0  \quad \forall v^k\in V  \label{eq:order-capacity} \\
    &(x^{c^{l'}}_{v^k}-x^{c^{l}}_{v^k})(c^{l}-x^{c^{l}}_{v^k})\leq \epsilon \quad  \forall c^{l'} \ge c^{l},\quad c^{l'},c^l\in C \label{eq:epsilon} \\
     & t \leq \sum_{l=1}^L\sum_{k=1}^K \sum_{i=1}^n \lambda_i^k x^{c^l}_{v^k} - D \nonumber \\
    & t\leq 0 \nonumber
    %  t \leq \sum_{l=1}^L\sum_{k=1}^K \sum_{i=1}^n \lambda_i^k x_{v^k} - D \nonumber\\
    % & t\leq 0.\nonumber 
\end{align}

 We can solve this problem by invoking an existing NLP solver (such as filtermpec in \cite{fletcherSolvingMathematicalPrograms2004a} or knitro in \cite{byrd1999interior}).

Notice that when solving the programming problem, we relax \eqref{definitionredcahnge} to \eqref{eq:epsilon} by introducing a small positive tolerance parameter $\epsilon$. This relaxation may violate incentive compatibility (\cref{def:IC1}). 
We use \emph{regret} to measure how closely the solution of the relaxed programming problem approximates incentive compatibility, which is a common metric for characterizing the extent of approximation to equilibria~\cite{corleyRegretBasedAlgorithmComputing2020,erezRegretMinimizationConvergence2023}.
It represents the maximum excess utility a client can obtain by choosing a contract item that does not match their true type.
Specifically, the regret of the solution of the relaxed programming problem is defined as
\begin{equation*}
    \begin{aligned}
        \max_{\left(v^k,c^l\right)\in V\times C} \, \max_{\left(v^{k'},c^{l'}\right)\in \Omega\left(c^l\right)} \, \left(p_{v^{k'}}^{c^{l'}}-v^kx_{v^{k'}}^{c^{l'}}\right)-\left(p_{v^k}^{c^l}-v^kx_{v^k}^{c^l}\right), \\  
        \Omega\left(c^l\right) \deq\set*{(v^{k'},c^{l'})\in V\times C\mid x_{v^{k'}}^{c^{l'}}\le c^l}.
    \end{aligned}
\end{equation*}

\cref{lem:bounded-regret} shows that the regret is at most $O(\sqrt{\epsilon})$.

\begin{lemma}[Bounded Regret]\label{lem:bounded-regret}

The regret of the solution to the optimization problem \eqref{program-2} is at most $O(\sqrt{\epsilon})$.
Specifically, $p_{v^{k'}}^{c^{l'}}-v^kx_{v^{k'}}^{c^{l'}}-(p_{v^k}^{c^l}-v^kx_{v^k}^{c^l}) \le \sum_{k=1}^{K}v^k\cdot\sqrt{\epsilon}$ for all $v^k, v^{k'} \in V,c^l,c^{l'}\in C$ subject to $x_{v^{k'}}^{c^{l'}} \leq c^{l}$.
\end{lemma}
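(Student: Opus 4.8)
## Proof Proposal

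\textbf{Overall strategy.} The plan is to bound the incentive-compatibility violation (regret) directly by tracing how the relaxed complementarity constraint \eqref{eq:epsilon} weakens the exact resource-greedy condition \eqref{definitionredcahnge}, and then re-running the decomposition argument behind \cref{lem:IC2}/\cref{thm:feasible} with an explicit error term. The key observation is that \eqref{eq:epsilon} forces, for every $v^k$ and every pair $c^{l'}\ge c^l$, the product $(x_{v^k}^{c^{l'}}-x_{v^k}^{c^l})(c^l-x_{v^k}^{c^l})\le\epsilon$; since both factors are nonnegative by \eqref{eq:order-valuation}–\eqref{eq:order-capacity}, this means that \emph{either} $x_{v^k}^{c^l}$ is within $\sqrt{\epsilon}$ of its full capacity $c^l$, \emph{or} $x_{v^k}^{c^{l'}}-x_{v^k}^{c^l}\le\sqrt{\epsilon}$. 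In other words, the "maximal recycling for dominated types'' condition \eqref{rg.2} now holds only up to an additive slack of $\sqrt{\epsilon}$.

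\textbf{Key steps, in order.} First I would fix an arbitrary client type $(v^k,c^l)$ and a deviation target $(v^{k'},c^{l'})$ with $x_{v^{k'}}^{c^{l'}}\le c^l$, and split the regret into the valuation part and the capacity part, mirroring the proof of \cref{lem:IC2}. Second, for the valuation part: the optimal payment rule \eqref{17} was derived precisely so that IC w.r.t.\ valuation (inequality \eqref{4}) holds \emph{exactly} within a fixed capacity column $c^{l'}$, regardless of $\epsilon$ — this follows from the backward-induction/squeeze-inequality argument of \cref{optimal payment-2}, which does not use \eqref{definitionredcahnge}. So $p_{v^{k'}}^{c^{l'}}-v^k x_{v^{k'}}^{c^{l'}}\le p_{v^k}^{c^{l'}}-v^k x_{v^k}^{c^{l'}}$ contributes zero to the regret. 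Third, for the capacity part: I need to compare $p_{v^k}^{c^{l'}}-v^k x_{v^k}^{c^{l'}}$ with $p_{v^k}^{c^l}-v^k x_{v^k}^{c^l}$ when $x_{v^k}^{c^{l'}}\le c^l$. Using the closed form \eqref{17}, the difference $(p_{v^k}^{c^{l'}}-v^kx_{v^k}^{c^{l'}})-(p_{v^k}^{c^l}-v^kx_{v^k}^{c^l})$ telescopes into a sum of terms of the form $(v^j - v^k)(x_{v^{j+1}}^{c^{l'}}-x_{v^{j+1}}^{c^l}) - (v^j-v^k)(x_{v^j}^{c^{l'}}-x_{v^j}^{c^l})$ over $j\ge k$, plus a boundary term $(v^K-v^k)(x_{v^K}^{c^{l'}}-x_{v^K}^{c^l})$; here I expect the gaps $x_{v^j}^{c^{l'}}-x_{v^j}^{c^l}$ to be governed by \eqref{eq:epsilon}. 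Whenever the constraint $x_{v^k}^{c^{l'}}\le c^l$ is active, the column-$c^l$ allocations cannot already be at full capacity $c^l$ (else $x_{v^k}^{c^{l'}}\le c^l = x_{v^k}^{c^l}$ would make the two items identical in the relevant sense), so the first factor of \eqref{eq:epsilon} is what must be small: $x_{v^j}^{c^{l'}}-x_{v^j}^{c^l}\le\sqrt{\epsilon}$ wherever it matters. Fourth, I would plug this $\sqrt{\epsilon}$ bound into the telescoped expression and bound each of the (at most $K$) surviving terms by $v^j\sqrt\epsilon$ (using $0\le v^j-v^k\le v^j$ and $|x_{v^j}^{c^{l'}}-x_{v^j}^{c^l}|\le\sqrt\epsilon$), summing to $\sum_{j=1}^K v^j\sqrt\epsilon$, which is the claimed $O(\sqrt\epsilon)$ bound.

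\textbf{Main obstacle.} The delicate point is the third step: carefully arguing \emph{which} of the two factors in \eqref{eq:epsilon} is the small one, uniformly over all $j\ge k$ in the telescoping sum, and handling the case analysis when $x_{v^j}^{c^l}$ is close to $c^l$ for some $j$ but not others (the monotonicity \eqref{eq:order-valuation} in the valuation index helps here, since once $x_{v^j}^{c^l}$ drops below $c^l-\sqrt\epsilon$ for some $j$ it stays below for all larger valuation indices). I would structure this as a lemma: for every $j$, $0\le x_{v^j}^{c^{l'}}-x_{v^j}^{c^l}\le\sqrt\epsilon$ \emph{or} $c^l-x_{v^j}^{c^l}\le\sqrt\epsilon$, and then observe that in the second case the relevant deviation term can be absorbed because $x_{v^{k'}}^{c^{l'}}\le c^l$ already pins down the comparison. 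A secondary subtlety is making sure the telescoping identity from \eqref{17} is written so that only $O(K)$ error terms appear rather than $O(K^2)$; this is why the statement's bound is $\sum_{k=1}^K v^k\sqrt\epsilon$ and not something quadratic in $K$ — the squeeze structure of the payments ensures the cross terms cancel. Once the per-term $\sqrt\epsilon$ control is in hand, the rest is a routine summation.
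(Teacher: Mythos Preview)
Your overall strategy matches the paper's: use the payment rule \eqref{17} to get exact valuation-IC, then bound the remaining capacity-comparison via the $\sqrt\epsilon$ slack coming from \eqref{eq:epsilon}. The dichotomy you state (for each $j$, either $x_{v^j}^{c^{l'}}-x_{v^j}^{c^l}\le\sqrt\epsilon$ or $c^l-x_{v^j}^{c^l}\le\sqrt\epsilon$) is exactly the right observation.

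The gap is in your decomposition order. You apply valuation-IC at column $c^{l'}$, reducing the regret to $(p_{v^k}^{c^{l'}}-v^kx_{v^k}^{c^{l'}})-(p_{v^k}^{c^l}-v^kx_{v^k}^{c^l})=\sum_{j\ge k}(v^{j+1}-v^j)\bigl(x_{v^{j+1}}^{c^{l'}}-x_{v^{j+1}}^{c^l}\bigr)$, which requires controlling the cross-column gaps for every $j\ge k$. But the hypothesis only gives $x_{v^{k'}}^{c^{l'}}\le c^l$, and monotonicity in valuation \eqref{eq:order-valuation} propagates this to $x_{v^j}^{c^{l'}}\le c^l$ only for $j\ge k'$. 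When $k<k'$ and $c^{l'}>c^l$, indices $j+1\in[k+1,k'-1]$ may well satisfy $x_{v^{j+1}}^{c^{l'}}>c^l$, landing you in your ``second case'' where only $c^l-x_{v^{j+1}}^{c^l}\le\sqrt\epsilon$ is available; then $x_{v^{j+1}}^{c^{l'}}-x_{v^{j+1}}^{c^l}$ can be as large as $c^{l'}-c^l+\sqrt\epsilon$, and your assertion that this ``can be absorbed because $x_{v^{k'}}^{c^{l'}}\le c^l$ already pins down the comparison'' is not substantiated.

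The paper sidesteps this by applying valuation-IC at column $c^l$ instead: from $p_{v^k}^{c^l}-v^kx_{v^k}^{c^l}\ge p_{v^{k'}}^{c^l}-v^kx_{v^{k'}}^{c^l}$ the regret is bounded by $(p_{v^{k'}}^{c^{l'}}-p_{v^{k'}}^{c^l})-v^k(x_{v^{k'}}^{c^{l'}}-x_{v^{k'}}^{c^l})$. Now the payment difference expands via \eqref{17} over indices $j\ge k'$ only, and for every such $j$ one has $x_{v^j}^{c^{l'}}\le x_{v^{k'}}^{c^{l'}}\le c^l$, so the $\sqrt\epsilon$ bound applies to every term and the sum is at most $\sum_{j}v^j\sqrt\epsilon$ with no case analysis. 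In short, swapping the two steps of your decomposition (use \eqref{4} at $c^l$, then compare columns at valuation $v^{k'}$) eliminates your ``main obstacle'' entirely.
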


Due to the space limitation, the proof of \cref{lem:bounded-regret} is deferred to \cref{bounded proof}.

\section{Conclusion}
In the context of low utilization of leased GPU computing power resources, this study/research proposes a contract-based incentive mechanism to encourage current clients to return their idle resources. Considering that clients may strategically conceal their true idle computing capacity and its actual valuation, we design a contract framework that incentivizes truthful resource reporting. 
The contract design problem is formulated as an optimization problem aimed at maximizing the utility of resource providers while ensuring IC, IR, and clients’ maximum resource capacities. 
In this work, we assume finite discrete customer types, which is realistic and reasonable. In practice, resource providers can extract typical customer types from historical transaction data through cluster analysis. And in the process of computing resource rental, resource providers often provide users with limited types of packages, which is essentially a discrete type division. Such setting also effectively avoids the additional computational complexity brought about by continuous types. However, we still have to face the more complex IC and IR constraints caused by double information uncertainty.
Given the non-convex nature of the optimization problem, we first characterize the feasibility of the proposed contract and transform the constraint conditions accordingly. Due to the presence of two-dimensional decision variables, we derived the optimal strategies in an iterative manner, that is, we initially derive the optimal payment strategy, and we  then incorporate the payment strategy into the original problem and derive the optimal allocation strategy. Future work includes extending the current framework to consider a continuous client-type distribution rather than the discrete setting.

\newpage
\bibliographystyle{splncs04}

\newpage
\appendix

\section{Omitted Proof of \cref{lem:IC2}}\label{proof:lem:IC2}
\begin{proof}
The only if part is straightforward.
For the if part, it suffices to show that \eqref{4} and \eqref{5} implies \eqref{6}.
For any $v^k, v^{k'} \in V$ and $c^l, c^{l'} \in C$ with $x_{v^{k'}}^{c^{l'}} \leq c^l$, let us consider following two cases.

\noindent{{\bf Case 1.}} $x_{v^k}^{c^{l'}} \leq c^l$.
Let us suppose that client $i$ has her true type of $(v^k,c^l)$ and misreports a type $(v^k,c^{l'})$. Therefore, by \eqref{5}, we have $p_{v^k}^{c^l} - v^k x_{v^k}^{c^l} \geq p_{v^k}^{c^{l'}} - v^k x_{v^k}^{c^{l'}}$. 
Similarly, if the true type is $(v^k,c^{l'})$, while the false one is $(v^{k'},c^{l'})$, then \eqref{4} implies that $p_{v^k}^{c^{l'}} - v^k x_{v^k}^{c^{l'}} \geq p_{v^{k'}}^{c^{l'}} - v^k x_{v^{k'}}^{c^{l'}}$.
Combining these two inequalities, we have $p_{v^k}^{c^l} - v^k x_{v^k}^{c^l} \geq p_{v^{k'}}^{c^{l'}} - v^k x_{v^{k'}}^{c^{l'}}$, which is just \eqref{6}.

\noindent{{\bf Case 2.}} $x_{v^k}^{c^{l'}} > c^l$.
Under this case, we have $x_{v^k}^{c^{l'}}>c^l\geq x_{v^k}^{c^{l}}$, where the second inequality is from the resource feasibility condition. Thus, \cref{def:RG} ensures $x_{v^k}^{c^{l}}=c^l$ and $c^{l'}>c^l$. Combining the Monotonicity in Capacity, the condition of $x_{v^{k'}}^{c^{l'}} \leq c^l$, and resource feasibility, we have $c^l\geq x_{v^{k'}}^{c^{l'}}\geq x_{v^{k'}}^{c^{l}}$, due to \cref{def:RG}, we can also have $x_{v^{k'}}^{c^{l'}}= x_{v^{k'}}^{c^{l}}$. By \eqref{5}, we have $p_{v^{k'}}^{c^{l}} - v^{k'} x_{v^{k'}}^{c^{l}} \geq p_{v^{k'}}^{c^{l'}} - v^{k'} x_{v^{k'}}^{c^{l'}}$ and $p_{v^{k'}}^{c^{l'}} - v^{k'} x_{v^{k'}}^{c^{l'}} \geq p_{v^{k'}}^{c^{l}} - v^{k'} x_{v^{k'}}^{c^{l}}$. It follows that $p_{v^{k'}}^{c^{l'}} - v^{k'} x_{v^{k'}}^{c^{l'}}= p_{v^{k'}}^{c^{l}} - v^{k'} x_{v^{k'}}^{c^{l}}$, and thus $p_{v^{k'}}^{c^{l'}}=p_{v^{k'}}^{c^{l}}$. Combining \eqref{4} and two equalities of $p_{v^{k'}}^{c^{l'}}=p_{v^{k'}}^{c^{l}}$ and $x_{v^{k'}}^{c^{l'}}=x_{v^{k'}}^{c^{l}}$, we have
$p_{v^{k}}^{c^{l}} - v^{k} x_{v^{k}}^{c^{l}} \geq p_{v^{k'}}^{c^{l}} - v^{k} x_{v^{k'}}^{c^{l}}\geq p_{v^{k'}}^{c^{l'}} - v^{k} x_{v^{k'}}^{c^{l'}}$, which is just \eqref{6}.
\end{proof}

\section{Omitted Proof of \cref{lem:ic}}\label{ic proof}
\begin{proof}
By incentive compatibility w.r.t. the valuation, we have $p_{v^p}^{c^l} - v^p x_{v^p}^{c^l} \geq p_{v^q}^{c^l} - v^p x_{v^q}^{c^l}$ and 
$p_{v^q}^{c^l} - v^q x_{v^q}^{c^l} \geq p_{v^p}^{c^l} - v^q x_{v^p}^{c^l}.$
It follows $(v^p - v^q)(x_{v^p}^{c^l} - x_{v^q}^{c^l}) \leq 0$, by adding 
these two inequalities.
Thus, (P2) is established.
In addition, by rearranging these two inequalities from incentive compatibility w.r.t. the valuation, we have 
$p_{v^p}^{c^l} - p_{v^q}^{c^l} \geq v^p (x_{v^p}^{c^l} - x_{v^q}^{c^l})$ and
$p_{v^q}^{c^l} - p_{v^p}^{c^l} \geq v^q (x_{v^q}^{c^l} - x_{v^p}^{c^l})$, which is equivalent to the squeeze inequality in (P3).
\end{proof}

\section{Omitted Proof of \cref{lem:ir}}\label{ir proof}
\begin{proof}
    Suppose two clients have their true types $(v^p,c^l)$ and $(v^q,c^l)$ with $v^p
    \leq v^q$. Denote their utilities as $U_p= p_{v^p}^{c^l} - v^p x_{v^p}^{c^l}$ and $U_q = p_{v^q}^{c^l} - v^q x_{v^q}^{c^l}$, respectively. Because $v^p \leq v^q$ and due to incentive compatibility, we have $ U_p = p_{v^p}^{c^l} - v^p x_{v^p}^{c^l} \geq p_{v^q}^{c^l} - v^p x_{v^q}^{c^l} \geq p_{v^q}^{c^l} - v^q x_{v^q}^{c^l} = U_q$. 
    Therefore, the utility of client decreases as the valuation increases, implying $U_{v^K}=\min_{v^k\in V}\{U_{v^k}\}$. 
    The individual rationality can be obtained if $U_{v^K}\geq 0$.
\end{proof}

\section{Omitted Proof of \cref{optimal payment-2}}\label{payment proof}

\begin{proof}
Taking into account the capacity, for those with true capacity $c^l$, due to the IC constraint, we have $p_{v^{K}}^{c^l} - v^{K} x_{v^{K}}^{c^l}\geq0$, to minimize the payment, we have $p_{v^{K}}^{c^l} - v^{K} x_{v^{K}}^{c^l}=0$. Similar to the proof of proposition 1, we can derive that for any capacity $c^l$, we all have $p_{v^{K-1}}^{c^l} = p_{v^{K}}^{c^l} - v^{K-1} (x_{v^{K}}^{c^l} - x_{v^{K-1}}^{c^l})$, and by iterating, we can derive the payment function.

Then, we consider the payment for those with true capacity ${c^{l'}}$, according to \eqref{5}, for private valuation  $v^{K}$, we have $p_{v^{K}}^{c^{l'}} - v^{K} x_{v^{K}}^{c^{l'}} \geq p_{v^{K}}^{c^l} - v^{K} x_{v^{K}}^{c^l}$. To minimize the payment, we make $p_{v^{K}}^{c^{l'}} - v^{K} x_{v^{K}}^{c^{l'}} = p_{v^{K}}^{c^l} - v^{K} x_{v^{K}}^{c^l}$, thus $p_{v^{K}}^{c^{l'}}=v^{K} x_{v^{K}}^{c^{l'}}+p_{v^{K}}^{c^l} - v^{K} x_{v^{K}}^{c^l}=v^{K} x_{v^{K}}^{c^{l'}}$. Similar to the above deduction, we can also get the payment for other types with the same capacity $c^{l'}$, and by iterating, we can derive the payment.
\end{proof}

\section{Omitted Proof of \cref{lem:bounded-regret}}\label{bounded proof}

\begin{proof}
    The relaxed constraint violates the \eqref{rg.2} in \cref{def:RG}. By the optimal payment, we can easily deduce \eqref{4}.
    For a client with true type $(v^k,c^l)$, the difference of utility by selecting the contract item designed for type $(v^{k'},c^{l'})$ such that $x^{c^{l'}}_{v^{k'}}\leq c^l$ is: 
    \begin{equation*}
        \begin{aligned}
            p^{c^{l'}}_{v^{k'}}-v^k x^{c^{l'}}_{v^{k'}}-(p^{c^{l}}_{v^{k}}-v^k x^{c^{l}}_{v^{k}})
        \end{aligned}
    \end{equation*}
    
    According to \eqref{4} we can imply that 
    \begin{equation*}
        \begin{aligned}
            p^{c^{l'}}_{v^{k'}}-v^k x^{c^{l'}}_{v^{k'}}-(p^{c^{l}}_{v^{k}}-v^k x^{c^{l}}_{v^{k}})
            \leq p^{c^{l'}}_{v^{k'}}-v^k x^{c^{l'}}_{v^{k'}}-(p^{c^{l}}_{v^{k'}}-v^k x^{c^{l}}_{v^{k'}})
        \end{aligned}
    \end{equation*}

    According to the payment rule, the difference between $p^{c^{l'}}_{v^{k'}}$ and $p^{c^{l}}_{v^{k'}}$ is:
    \begin{align}
             &p^{c^{l'}}_{v^{k'}}-p^{c^{l}}_{v^{k'}}\nonumber\\
             =&v^K(x_{v^K}^{c^{l'}}-x_{v^K}^{c^{l}})-\sum_{j=k'}^{K-1}v^j(x_{v^{j+1}}^{c^{l'}}-x_{v^{j+1}}^{c^{l}})+\sum_{j=k'}^{K-1}v^j(x_{v^{j}}^{c^{l'}}-x_{v^{j}}^{c^{l}}) \nonumber
    \end{align}

    And \eqref{eq:epsilon} implies that for all $c^{l'}, c^l \in C$ and $v^k \in V$ such that $c^{l'}\geq c^l$, we have $x_{v^{k}}^{c^l}\ge c^l-\frac{\epsilon}{\left(x_{v^{k}}^{c^{l'}}-x_{v^{k}}^{c^l}\right)}$.
    Since $x_{v^{k}}^{c^l}\le x_{v^{k}}^{c^{l'}}\le c^l$, we have that $x_{v^{k}}^{c^{l'}}-x_{v^{k}}^{c^l}\le \frac{\epsilon}{\left(x_{v^{k}}^{c^{l'}}-x_{v^{k}}^{c^l}\right)}$.
    Thus $0\leq x_{v^{k}}^{c^{l'}}-x_{v^{k}}^{c^l}\le \sqrt{\epsilon}$.

    \noindent\textbf{Case 1:} $c^{l'}\geq c^l$. Combing \eqref{eq:order-capacity}, there is:
    \begin{align}
        &p^{c^{l'}}_{v^{k'}}-p^{c^{l}}_{v^{k'}}\nonumber\\
        \leq &v^K \sqrt{\epsilon}-\sum_{j=k'}^{K-1}v^j\cdot 0+\sum_{j=k'}^{K-1}v^j\cdot \sqrt{\epsilon} \nonumber\\
        =&v^K(x_{v^K}^{c^{l'}}-x_{v^K}^{c^{l}})-\sum_{j=k'}^{K-1}v^j(x_{v^{j+1}}^{c^{l'}}-x_{v^{j+1}}^{c^{l}})+\sum_{j=k'}^{K-1}v^j(x_{v^{j}}^{c^{l'}}-x_{v^{j}}^{c^{l}}) \label{shizi} \nonumber\\
             \leq &v^K \sqrt{\epsilon}+\sum_{j=1}^{K-1}v^j\cdot \sqrt{\epsilon} \nonumber\\
             =&\sum_{k=1}^{K}v^k\cdot \sqrt{\epsilon}\nonumber 
    \end{align}

    Therefore, we have that 
    \begin{equation*}
        \begin{aligned}
            &p^{c^{l'}}_{v^{k'}}-v^k x^{c^{l'}}_{v^{k'}}-(p^{c^{l}}_{v^{k}}-v^k x^{c^{l}}_{v^{k}}) \\
            \leq& p^{c^{l'}}_{v^{k'}}-v^k x^{c^{l'}}_{v^{k'}}-(p^{c^{l}}_{v^{k'}}-v^k x^{c^{l}}_{v^{k'}})\\
            \leq &\sum_{k=1}^{K}v^k \sqrt{\epsilon}-v^k(x^{c^{l'}}_{v^{k'}}-x^{c^{l}}_{v^{k'}})\\
            \leq &\sum_{k=1}^{K}v^k \sqrt{\epsilon}
        \end{aligned}
    \end{equation*}

    \noindent\textbf{Case 2:} $c^{l'} < c^l$. 
    Combing \eqref{eq:order-capacity}, there is:
    \begin{align}
        &p^{c^{l'}}_{v^{k'}}-p^{c^{l}}_{v^{k'}}\nonumber\\
             =&v^K(x_{v^K}^{c^{l'}}-x_{v^K}^{c^{l}})-\sum_{j=k'}^{K-1}v^j(x_{v^{j+1}}^{c^{l'}}-x_{v^{j+1}}^{c^{l}})+\sum_{j=k'}^{K-1}v^j(x_{v^{j}}^{c^{l'}}-x_{v^{j}}^{c^{l}})  \nonumber\\
             \leq &v^K \cdot 0-\sum_{j=k'}^{K-1}v^j\cdot (-\sqrt{\epsilon})+\sum_{j=k'}^{K-1}v^j\cdot 0 \nonumber\\
             = &\sum_{k=1}^{K-1}v^k\cdot \sqrt{\epsilon} \nonumber
    \end{align}

    Therefore, we have that 
    \begin{equation*}
        \begin{aligned}
            &p^{c^{l'}}_{v^{k'}}-v^k x^{c^{l'}}_{v^{k'}}-(p^{c^{l}}_{v^{k}}-v^k x^{c^{l}}_{v^{k}}) \\
            \leq& p^{c^{l'}}_{v^{k'}}-v^k x^{c^{l'}}_{v^{k'}}-(p^{c^{l}}_{v^{k'}}-v^k x^{c^{l}}_{v^{k'}})\\
            \leq &\sum_{k=1}^{K-1}v^k \sqrt{\epsilon}-v^k(x^{c^{l'}}_{v^{k'}}-x^{c^{l}}_{v^{k'}})\\
            \leq &\sum_{k=1}^{K-1}v^k \sqrt{\epsilon}-v^K(-\sqrt{\epsilon})\\
            \leq &\sum_{k=1}^{K}v^k \sqrt{\epsilon}
        \end{aligned}
    \end{equation*}

\end{proof}

\end{document}